\pdfoutput=1
\documentclass{article}

\usepackage{PRIMEarxiv}
 
\usepackage[utf8]{inputenc}
\usepackage[T1]{fontenc}
\usepackage{hyperref}
\usepackage{url}
\usepackage{booktabs}
\usepackage{amsmath,amssymb,amsfonts,amsthm}
\usepackage{nicefrac}
\usepackage{microtype}
\usepackage{fancyhdr}
\usepackage{graphicx}
\usepackage{xcolor}
\usepackage{colortbl}
\usepackage{enumitem}
\usepackage{subcaption}
\usepackage{adjustbox}
\graphicspath{{media/}}
 
\newtheorem{proposition}{Proposition}
\theoremstyle{definition}
\newtheorem{definition}{Definition}
\newtheorem{remark}{Remark}
\newtheorem{example}{Example}
\newcommand{\C}{\mathbb{C}}

\newcommand\blfootnote[1]{%
  \begingroup
  \let\thefootnote\relax\footnotetext{#1}%
  \endgroup
}
\title{Spectral Braids: A Topological Descriptor for Dynamic Directed Networks}
 
\author{
  Mehmet Emin Aktas \\
  School of Data Science and Analytics \\
  Kennesaw State University \\
  Kennesaw, GA, USA \\
  \texttt{maktas1@kennesaw.edu} \\
  \And
  Esra Akbas \\
  Department of Computer Science \\
  Georgia State University \\
  Atlanta, GA, USA \\
  \texttt{eakbas1@gsu.edu} \\
}
 
\begin{document}
\maketitle
\blfootnote{Accepted at the 2026 IEEE International Conference on Big Data (IEEE BigData 2026). DOI: TBD.\newline
\copyright~2026 IEEE. Personal use of this material is permitted. Permission from IEEE must be obtained for all other uses, in any current or future media, including reprinting/republishing this material for advertising or promotional purposes, creating new collective works, for resale or redistribution to servers or lists, or reuse of any copyrighted component of this work in other works.}

\begin{abstract}
Eigenvalues are a workhorse of graph machine learning: spectral clustering,
spectral graph neural networks, and graph kernels all summarize a network by the
eigenvalues of a matrix. These eigenvalues are almost always treated as a static list
of numbers. When a graph changes, however, its eigenvalues move, and when the matrix
is not symmetric, they move in the complex plane, where they can wind around each
other. The pattern of this winding is a \emph{braid}, a classical object from
topology. We propose \emph{spectral braids}, a methodology for dynamic directed
networks: each snapshot is aggregated into a community-level transition matrix, its complex eigenvalues are tracked over time,
and the braid they trace is read off as a word. The word is then mapped to a matrix
by the Burau representation, which gives a fixed-size feature vector for any
classifier. The braid has a concrete meaning: each eigenvalue describes how a group
of communities behaves, and each crossing records one group overtaking another. We
evaluate the method on dynamic network classification, using synthetic networks of
online communities whose labels describe how the network evolved between the same
starting and ending states. Our experimental results show that the proposed features
reach 97.1\% accuracy on four classes, against 44.5\% for the best feature set that
does not track eigenvalues, and that the Burau features also separate networks
containing the same events in a different order, which simple counts of crossings
cannot do. To the best of our knowledge, this is the first study that defines braids
and braid monodromy on graph data and uses them as features for learning on dynamic
networks.
\end{abstract}

\keywords{Dynamic directed networks \and spectral methods \and braid group \and braid monodromy \and topological data analysis}

\section{Introduction}
Most spectral methods for graphs share one step: build a matrix from the graph (adjacency, Laplacian or transition matrix), compute its eigenvalues, and use them as features or as a basis for clustering and learning. The eigenvalues are treated as a \emph{set of numbers}. That works well for a single graph. Many graphs of interest, however, change. Communication platforms record who messages whom, payment systems who pays whom, and mobility systems which regions trips connect, continuously and at a massive scale. As the graph changes, its eigenvalues move.

For symmetric matrices the eigenvalues are real numbers. They move along a line, and two eigenvalues on a line can never pass each other without colliding. For \emph{non-symmetric} matrices, which arise naturally from directed graphs, the eigenvalues can be complex numbers. They move in a \emph{plane}, where they can go around each other. A sequence of such go-arounds is exactly what topologists call a \emph{braid} (Fig.~\ref{fig:braid}). Braids are the natural language for relative motion in the plane. They have a long history in topology and algebraic geometry, where \emph{braid monodromy} describes how the roots of a polynomial wind around each other as a parameter moves around a loop~\cite{zariski1929,moishezon1981,aktas2019braid}. Braids are also used to analyze particle trajectories in fluids~\cite{thiffeault2010}, and braided complex eigenvalues were recently observed in physics experiments~\cite{wang2021braiding}.

In this paper, we introduce \textit{spectral braids}, the braid monodromy for dynamic directed networks. We first aggregate each snapshot of the network into a small community-level transition matrix. Because edges are directed, this matrix is non-symmetric, and its eigenvalues can be complex. We then follow these eigenvalues from snapshot to snapshot and record how they pass around each other as a braid word. The network need not return to its starting state: any observation window yields a braid, and a network that does return yields the braid monodromy in the classical sense. To use the braid in machine learning, we map the word to a matrix with the Burau representation~\cite{burau1936}, which turns braids of any length into feature vectors of a fixed size. We show that the braid is non-trivial only when the network carries \emph{circulation}, that is, flow that goes around cycles more in one direction than the other. For real matrices, we introduce an \emph{upper-half-plane braid} that keeps the direction of rotation. 

We also give the eigenvalues a concrete meaning. Consider a group of online
communities that pass content among themselves. If all interactions stayed inside
each community, the transition matrix of the group would be the identity, and its
eigenvalues would equal one; the more interactions stay inside, the further to the
right the eigenvalue of the group lies. Its real part, therefore, measures how much
the group talks to itself, which we call its echo-chamber level. If the interactions
between the communities were balanced, with as many going from $a$ to $b$ as from
$b$ to $a$, the matrix would be symmetric and its eigenvalues real. An imaginary
part appears only when content flows around the group more in one direction than in
the other, and it grows with this imbalance. The imaginary part, therefore, measures
how one-way the interactions are, which we call the broadcast level of the group
(Section~\ref{sec:interp} gives the exact formula). With this reading, a crossing in
the braid has a plain meaning. Two eigenvalues cross when their real parts exchange
order, which is the moment one group becomes a stronger echo chamber than the other.
The sign of the crossing records which eigenvalue passes above the other, that is,
which of the two groups was broadcasting more at that moment.

Next, we build two datasets of dynamic networks labeled by how their strongest echo chamber formed. Networks with different labels begin and end in the same state, so only the path distinguishes them. Our experimental results show that Burau features identify the label in 97.1\% of the networks at about 3{,}400 edges per snapshot, where features that ignore the path stay at chance and the best features that do not track eigenvalues reach 44.5\%. When networks contain the same events in a different order, counts of crossings fall to chance, and Burau features reach 90.8\%. To our knowledge, this is the first use of braids and braid monodromy for graph data.

The main contributions of this work are summarized as follows.
\begin{itemize}[leftmargin=*,itemsep=1pt,topsep=2pt]
\item We propose \emph{spectral braids}: a pipeline from edge streams to braid words
that is linear in the number of edges and is defined for any observation window, not
only for periodic networks (Section~\ref{sec:spec}).
\item We describe the tracked eigenvalues in two ways for dynamic network
classification: by geometric statistics of the strands, and by the braid, which the
Burau representation turns into fixed-size features (Section~\ref{sec:burau}).
\item We characterize when spectral braids are informative. They are trivial without
circulation, and for real matrices, the braid of all eigenvalues consists of the braid of the upper
half-plane and its mirror image, which motivates the \emph{upper-half-plane braid}. We also give the
eigenvalues and the crossings a concrete meaning
(Sections~\ref{sec:props}--\ref{sec:interp}).
\item We introduce two synthetic datasets of dynamic networks of online communities,
with meaningful labels and known ground truth (Section~\ref{sec:bench}), and evaluate the method on them for dynamic network classification: accuracy against spectral
baselines, transfer to networks that look different from the training data,
sensitivity to the number of edges per snapshot, and running time
(Section~\ref{sec:exp}).
\item We report four real datasets on which the method does not apply and explain
why (Section~\ref{sec:scope}).
\end{itemize}
\textbf{Reproducibility:} The source code for the proposed methods and the datasets is available at \url{https://github.com/maktas3838/spectral-braids}.
\section{Background and Related Work}

\subsection{Dynamic Directed Networks}
\label{sec:prelim}
A \emph{directed graph} is a pair $G=(V,E)$, where $V$ is a finite set of nodes and
$E\subseteq V\times V$ is a set of directed edges; an edge $(u,v)$, written $u\to v$,
goes from $u$ to $v$. The graph is \emph{undirected} if $u\to v\in E$ whenever
$v\to u\in E$. Edges may carry positive weights, in which case every count of edges
below is replaced by a sum of weights.

\begin{definition}[Dynamic directed network]\label{def:dyn}
A \emph{dynamic directed network} is a sequence $G=(G_0,\dots,G_T)$ of directed
graphs $G_t=(V,E_t)$ on a common node set $V$. Each $G_t$ is called a
\emph{snapshot} and contains the edges observed in the $t$-th time window.
\end{definition}

Furthermore, a \emph{community partition} of $G$ is a partition
$\mathcal{C}=\{C_1,\dots,C_K\}$ of $V$ into $K$ disjoint communities. In this paper, we also examine how communities in dynamic directed networks evolve over time through the community transition matrices defined as follows. 

\begin{definition}[Community transition matrix]\label{def:ctm}
Let $G$ be a dynamic directed network with community partition $\mathcal{C}$, and
let
\[
N_t(a,b)=\#\{u\to v\in E_t:\ u\in C_a,\ v\in C_b\}
\]
be the number of edges of $G_t$ from $C_a$ to $C_b$. The \emph{community transition
matrix} of the snapshot $G_t$ is the $K\times K$ matrix
\begin{equation}\label{eq:ctm}
P_t(a,b)=\frac{N_t(a,b)}{\sum_{c=1}^{K}N_t(a,c)},
\end{equation}
the fraction of the edges leaving $C_a$ that arrive in $C_b$. 
\end{definition}

$P_t$ is row-stochastic. It is the transition matrix of a random walk that
moves among communities by following a random edge~\cite{piccardi2011}, and its
diagonal entry $P_t(a,a)$ is the fraction of edges that stay inside $C_a$. Since
edges are directed, $N_t(a,b)\neq N_t(b,a)$ in general, so $P_t$ is not symmetric, and its eigenvalues can be complex.

\begin{figure}[t]
\centering
\includegraphics[width=\textwidth]{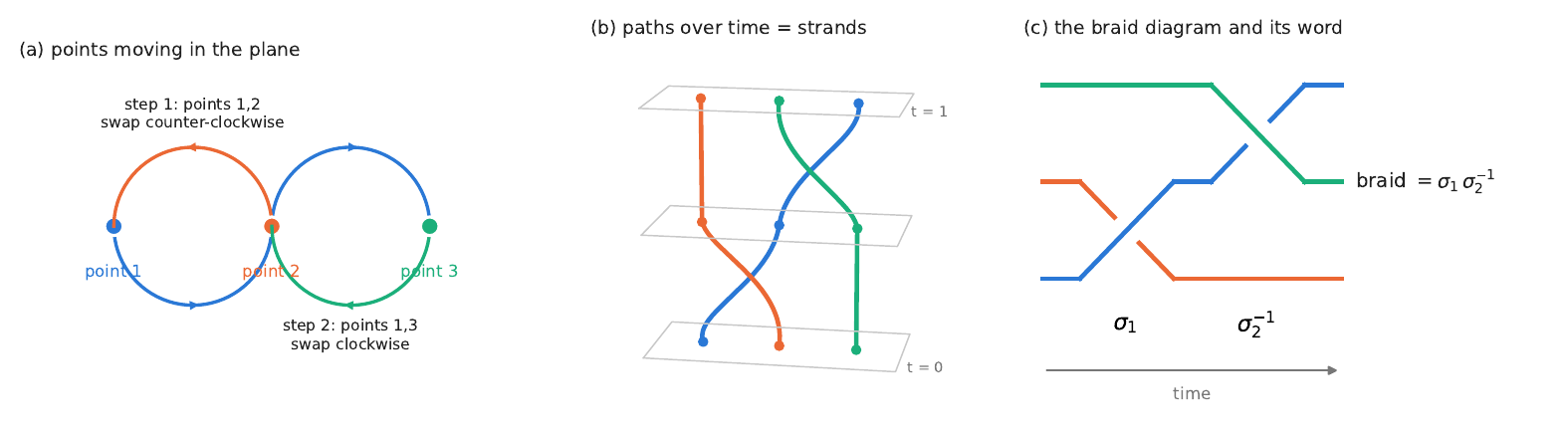}

\caption{Braids. (a) Points moving in the plane: points 1 and 2 swap
counter-clockwise, then points 1 and 3 swap clockwise. (b) Stacked over time, the
paths form strands. (c) Viewed from the front, each swap of neighboring strands is a
crossing, written $\sigma_i$ (counter-clockwise) or $\sigma_i^{-1}$ (clockwise); this
motion is the braid $\sigma_1\sigma_2^{-1}$.}
\label{fig:braid}
\end{figure}

\subsection{Braids and Braid Monodromy}
Consider $n$ distinct points moving in the plane over a time
interval. Stacking their positions over time produces $n$ non-intersecting strands
(Fig.~\ref{fig:braid}a,b).

\begin{definition}[Braid]\label{def:braid}
Fix $n$ distinct points $p_1,\dots,p_n\in\C$. A \emph{braid on $n$ strands} is a
collection of $n$ disjoint paths (strands) in $\C\times[0,1]$ such that each strand
meets every plane $\C\times\{s\}$ exactly once, the $i$-th strand starts at
$(p_i,0)$, and the strands end at the points $(p_1,1),\dots,(p_n,1)$ in some order.
Equivalently, a braid is a path of sets of $n$ distinct points in $\C$ that starts and ends at $\{p_1,\dots,p_n\}$, up to deformation.
\end{definition}

Projecting the strands onto a line and recording every
exchange of neighboring strands gives a word in the generators
$\sigma_1,\dots,\sigma_{n-1}$ and their inverses (Fig.~\ref{fig:braid}c). Two words
represent the same braid if the strands can be deformed into each other without
passing through one another. For example, $\sigma_1\sigma_1^{-1}$ is the trivial
braid, and $\sigma_1\sigma_2\sigma_1=\sigma_2\sigma_1\sigma_2$. Braids on $n$ strands
form the braid group $B_n$, where the product of two braids is obtained by placing
one after the other. The order of crossings matters:
$\sigma_1\sigma_2\neq\sigma_2\sigma_1$.

\emph{Braid monodromy} is the braid traced by the roots of a polynomial whose
coefficients depend on a parameter. Let $f(x,y)$ be a polynomial of degree $n$ in
$y$, and let $\Delta\subset\C$ be the finite set of values of $x$ at which $f(x,\cdot)$
has a repeated root. For every $x\notin\Delta$ the equation $f(x,y)=0$ has $n$
distinct roots. When $x$ moves along a loop in $\C\setminus\Delta$, the roots move
without colliding and return to the set they started from, so they trace a braid.
This assigns a braid to every loop and defines a homomorphism
$\pi_1(\C\setminus\Delta)\to B_n$, the braid monodromy of
$f$~\cite{zariski1929,moishezon1981,aktas2019braid}. In this paper, the polynomial is
the characteristic polynomial $\det(\lambda I-P_t)$ of a matrix built from the
network, its roots are the eigenvalues, and the parameter is time.

Furthermore, the Burau representation~\cite{burau1936} maps every braid to an
$n\times n$ matrix that depends on a parameter $t$. The generator $\sigma_i$ is sent
to the identity matrix with the block
$\bigl(\begin{smallmatrix}1-t & t\\ 1 & 0\end{smallmatrix}\bigr)$ placed at rows and
columns $i,i+1$, and a word is sent to the product of the matrices of its letters.
Two words of the same braid give the same matrix. The representation is faithful for
$n\le 4$ and not for $n\ge5$~\cite{bigelow1999}.

\subsection{Related Work}
\subsubsection{Learning on dynamic graphs} Methods for dynamic graphs either process a
sequence of snapshots or a stream of timed edges~\cite{kazemi2020}. Temporal graph
neural networks combine message passing with a model of time, for example by
evolving the weights of a graph convolution~\cite{pareja2020}, by attention over
structure and time~\cite{sankar2020,xu2020}, or by a memory per
node~\cite{rossi2020}. These models learn representations end-to-end and do not produce an explicit description of how the network's structure changed.

\subsubsection{Spectral methods for dynamic graphs} A second line of work summarizes each
snapshot by its spectrum. Graph descriptors built from Laplacian eigenvalues compare
graphs of different sizes~\cite{verma2017,tsitsulin2018}. For dynamic graphs,
eigenvectors or singular values of consecutive snapshots are compared to detect
anomalies and change points~\cite{ide2004,huang2020}, spectral clustering is
smoothed over time~\cite{chi2007}, and eigenpairs are updated incrementally as edges
arrive~\cite{chen2015}. Graph signal processing extends the graph Fourier transform
to signals that vary in time~\cite{grassi2018}. In all of these, the spectrum of a
snapshot is treated as a set or a sorted list, and the matrices are symmetric, so
the eigenvalues are real. None of them records how eigenvalues move relative to each
other, which is the information a braid carries.

\subsubsection{Spectral methods for directed graphs} Directed graphs give non-symmetric
matrices. The usual response is to replace them by a symmetric or Hermitian
operator, such as the Laplacian of Chung~\cite{chung2005}, the Hermitian adjacency
matrix~\cite{guo2017}, or the magnetic Laplacian~\cite{fanuel2018}, which is also
the basis of spectral neural networks for directed graphs~\cite{zhang2021magnet}.
These operators have real eigenvalues by design. We take the opposite route and keep
the non-symmetric matrix, because its complex eigenvalues are what can braid.
Aggregating a network into a Markov chain on its communities is a standard tool for
testing community structure~\cite{piccardi2011}; there, the probability of staying in a community is called its persistence probability. Irreversibility of such
chains is often measured by entropy production~\cite{lynn2021}, a single number that
does not describe \emph{how} the dynamics circulate.

\subsubsection{Topology and braids} Topological data analysis of graphs relies mostly on
persistent homology~\cite{aktas2019ph}, which has been applied to time-varying
graphs~\cite{hajij2018,myers2023}; vineyards~\cite{cohensteiner2006} track
persistence diagrams through time. Braids have been used outside topology to
quantify mixing from particle trajectories in fluids~\cite{thiffeault2010}, and
braids of complex eigenvalues have been observed in non-Hermitian physical
systems~\cite{wang2021braiding,patil2022}. To our knowledge, braids have not been
defined on graph data.

\section{Methodology}\label{sec:method}
In this section, we present our methodology for describing a dynamic directed
network by a braid. We first define the spectral braid of a dynamic network with
respect to a matrix representation, explain why we choose the community transition
matrix, and give the three steps that compute the braid from the snapshots
(Section~\ref{sec:spec}). We then turn the braid into features for machine learning:
the Burau matrix, which gives a fixed-size vector for braids of any length, and the
tracked geometry of the strands (Section~\ref{sec:burau}). Next, we establish the
theoretical properties of spectral braids, namely when the braid is trivial, why we
restrict it to the upper half-plane, and what it is invariant to
(Section~\ref{sec:props}). Finally, we explain what the eigenvalues and the
crossings of the braid mean in terms of the behaviour of groups of communities
(Section~\ref{sec:interp}).

\subsection{Spectral Braids}\label{sec:spec}
\begin{figure}[t]
\centering
\includegraphics[width=\textwidth]{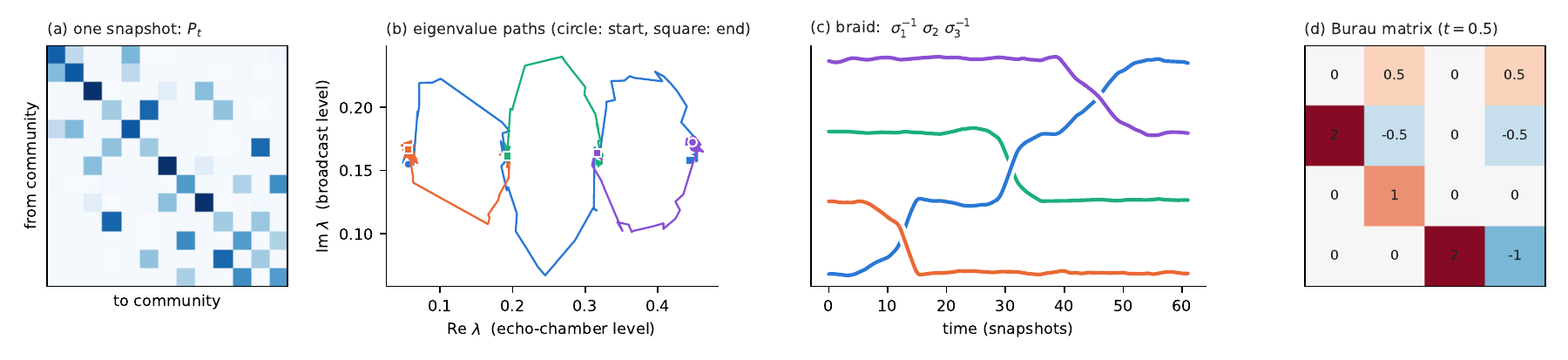}
\caption{Spectral braids on one dynamic network with 13 communities. (a) The
community transition matrix of one snapshot. (b) The paths of the four eigenvalues in
the upper half-plane over about 60 snapshots. The blue eigenvalue moves from the left end
to the right end and passes the three others. (c) The same paths drawn over time: the
blue strand crosses the three others one by one, which gives the braid
$\sigma_1^{-1}\sigma_2\sigma_3^{-1}$. (d) The Burau matrix of this braid.}
\label{fig:pipeline}
\end{figure}

Let $G$ be a dynamic directed network with snapshots $G_0,\dots,G_T$
(Definition~\ref{def:dyn}). A \emph{matrix representation} $M$ assigns to every
snapshot $G_t$ a square matrix $M_t=M(G_t)$ of a fixed size $n$, for example, its
adjacency matrix or a community transition matrix.

\begin{definition}[Spectral braid]\label{def:sb}
Let $G$ be a dynamic directed network with snapshots $G_0,\dots,G_T$, and let $M$ be
a matrix representation. Connect consecutive matrices by straight lines,
\[
M(t+s)=(1-s)\,M_t+s\,M_{t+1},
\]
for $t=0,\dots,T-1$ and $s\in[0,1]$, and assume that the eigenvalues of $M(\tau)$
are distinct for every $\tau\in[0,T]$. The eigenvalues of $M(\tau)$ are then $n$
distinct points moving in $\C$, and the \emph{spectral braid of $G$ with respect to
$M$} is the braid they trace.
\end{definition}

The spectral braid is well defined for any matrix representation. In this paper, we
use the community transition matrix $P_t$ of Definition~\ref{def:ctm}, so that
$M_t=P_t$ and $n=K$. We choose it over a node-level matrix for three reasons. First,
each row of a node-level matrix is computed from the few edges of a single node, so its eigenvalues are noisy, whereas each row of the community transition matrix is
computed from the edges of all nodes in a community. Second, a network with $n$
nodes gives $n$ strands and costs $O(n^3)$ per step, while $K$ communities give at
most $K$ strands at $O(K^3)$ per step. Third, the
eigenvalues of a community transition matrix can be read in terms of how groups of
communities behave (Section~\ref{sec:interp}), which makes the crossings of the
braid interpretable.

Given a dynamic directed network $G$ and its community partition, the spectral braid
is computed in three steps.

\textbf{Step 1: Build the representation.} For each snapshot, compute the community
transition matrix $P_t$ by~\eqref{eq:ctm} (Fig.~\ref{fig:pipeline}a). A single pass
over the edges suffices, so the cost is $O(m_t)$ for $m_t$ edges.

\textbf{Step 2: Track eigenvalues.} On each interval $[t,t+1]$, evaluate the path
$P(t+s)=(1-s)P_t+sP_{t+1}$ of Definition~\ref{def:sb} at $S$ intermediate points and compute the eigenvalues at each point. Eigenvalues at consecutive points are matched
by minimum-cost assignment on their distances, which yields continuous
\emph{strands} (Fig.~\ref{fig:pipeline}b).

\textbf{Step 3: Read the braid.} Project the strands onto a fixed generic line
$\ell_\phi=\{\mathrm{Re}(e^{-i\phi}z)\}$ and number them by their order on this line.
Whenever the strands at positions $i,i+1$ exchange order, record $\sigma_i$ if the
exchange is counter-clockwise and $\sigma_i^{-1}$ otherwise
(Fig.~\ref{fig:pipeline}c). Finally, cancel adjacent pairs $\sigma_i\sigma_i^{-1}$.
The resulting word represents the spectral braid of $G$ with respect to $P$.

The definition does not require the network to return to its starting state. If
$P_T=P_0$, for example in a network with a daily or weekly cycle, the path is a
loop, the eigenvalues return to the set they started from, and the spectral braid is
a braid in the sense of Definition~\ref{def:braid}: it is the \emph{braid monodromy}
of the loop. Otherwise the eigenvalues end at different points and we call the
result an \emph{open} braid. Step~3 reads a word in both cases. For an open braid,
the word depends on the fixed line $\ell_\phi$, and a crossing that happens very close to the first or last snapshot is less reliable than one in the middle.

\subsection{Features from the Braid and the Strands}\label{sec:burau}
A braid word is easy to read but hard to give to a classifier. Its length varies, and
the same braid has many words. Simple summaries, such as the number of times each
generator occurs, ignore the order of the crossings. We therefore map the word $w$ to
its Burau matrix $\rho_t(w)$ (Fig.~\ref{fig:pipeline}d). The matrix is the same for
all words of the same braid, and it depends on the order of the letters because
matrix multiplication does. Networks may have different numbers of strands. We place
every braid in $B_N$ for a fixed $N$ by letting the missing strands stay idle, which
gives an $N\times N$ matrix for every network. We evaluate at $t=e^{i}$ and $t=0.5$
and use the real and imaginary parts of the entries as features. The value $t=e^{i}$
is transcendental, so the evaluation loses no information beyond the Burau
representation itself, and it lies on the unit circle, which keeps the entries
bounded for long words. The value $t=0.5$ adds a second, real-valued view of the
same braid.

The strands of Step~2 can also be described directly. For each strand, we compute
seven statistics: the mean and the standard deviation of its real and imaginary part, the signed area it encloses, its path length, and the correlation between its
real and imaginary parts. Strands are ordered by their real part at the first
snapshot. We call this descriptor \emph{tracked geometry}. It is the second output
of our method: it keeps metric information that the braid discards, and it does not
exist without eigenvalue tracking.

\textbf{Complexity.} Step 1 is $O(m_t)$ per snapshot and can be run in parallel or on
a stream. Steps 2--3 cost $O(S K^3)$ per snapshot, independent of the network size,
and the Burau matrix costs $O(N^3)$ per crossing. With $K$ fixed, the method is linear
in the number of edges.
\subsection{Theoretical Properties}\label{sec:props}
Before using spectral braids, three questions need to be answered: when does the braid
carry any information, what is special about the real matrices that networks
produce, and which properties of the network does the braid ignore? We answer them
in turn.

A braid needs eigenvalues that move in the plane, so it needs complex eigenvalues.
The first result identifies the networks that have none. Recall from
Definition~\ref{def:sb} that $P(\tau)$, $\tau\in[0,T]$, denotes the path that
connects the snapshots, with $P(t)=P_t$. For a fixed $\tau$, let $\pi$ be the
stationary distribution of the random walk with transition matrix $P(\tau)$, that
is, $\pi P(\tau)=\pi$, so that $\pi_a$ is the long-run share of time the walk spends
in community $a$. The quantity $\pi_a P(\tau)(a,b)$ is then the flow from $a$ to $b$
per step. The walk is called \emph{reversible} if this flow is balanced for every
pair of communities,
\begin{equation}\label{eq:db}
\pi_a\,P(\tau)(a,b)=\pi_b\,P(\tau)(b,a)\qquad\text{for all } a,b,
\end{equation}
a condition known as \emph{detailed balance}.

\begin{proposition}[No circulation, no braid]\label{prop:rev}
If $P(\tau)$ satisfies detailed balance~\eqref{eq:db} with $\pi_a>0$ for all $a$,
for every $\tau\in[0,T]$, then all eigenvalues of $P(\tau)$ are real and the
spectral braid has no crossings. This holds in particular when all snapshots $P_t$
are reversible with respect to the same stationary distribution.
\end{proposition}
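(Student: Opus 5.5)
The plan is the standard symmetrization argument for reversible Markov chains. Fix $\tau\in[0,T]$ and write $P=P(\tau)$ and $D=\mathrm{diag}(\pi_1,\dots,\pi_K)$. Since every $\pi_a>0$, the matrix $D^{1/2}$ is invertible. I will show that
\[
S=D^{1/2}PD^{-1/2}
\]
is a real symmetric matrix. Its entries are $S(a,b)=\pi_a^{1/2}P(a,b)\pi_b^{-1/2}$. Multiplying and dividing by $\pi_a^{1/2}\pi_b^{1/2}$ gives
\[
S(a,b)=\frac{\pi_aP(a,b)}{\sqrt{\pi_a\pi_b}}.
\]
By detailed balance~\eqref{eq:db} this equals $\pi_bP(b,a)/\sqrt{\pi_a\pi_b}=S(b,a)$. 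The matrix $P$ is similar to $S$, so it has the same eigenvalues. By the spectral theorem these eigenvalues are real. This holds for every $\tau$.

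Next I deduce that the braid has no crossings. The assumption of Definition~\ref{def:sb} says the eigenvalues of $P(\tau)$ are distinct for all $\tau$. Each strand is a continuous path in $\mathbb{R}$, and no two strands ever meet. By the intermediate value theorem, two real continuous functions that never coincide keep their order on $[0,T]$. So the order of the strands along the real line never changes. In Step~3 the strands are projected onto the generic line $\ell_\phi$, and we need $\phi$ such that the real axis is not collapsed. The projection of a real point $x$ is $x\cos\phi$, so any generic $\phi$ with $\cos\phi\neq0$ maps $\mathbb{R}$ monotonically. Hence the projected order is constant too, and no letter $\sigma_i^{\pm1}$ is ever recorded. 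The braid is therefore the trivial (pure, identity) braid. For an open braid it is the trivial word.

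For the final sentence, suppose every snapshot $P_t$ satisfies~\eqref{eq:db} with the same positive $\pi$. Consider the convex combination $P(t+s)=(1-s)P_t+sP_{t+1}$. It is row-stochastic, and
\[
\pi_aP(t+s)(a,b)=(1-s)\pi_aP_t(a,b)+s\,\pi_aP_{t+1}(a,b).
\]
This is symmetric in $(a,b)$ because each term is. So $P(t+s)$ satisfies detailed balance with the same $\pi$. Summing over $a$ then also shows $\pi P(t+s)=\pi$, so $\pi$ is indeed its stationary distribution. The hypothesis of the first part therefore holds on the whole path.

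There is one subtle point. If the eigenvalues are real but are allowed to collide, the braid is not defined. The standing distinctness assumption of Definition~\ref{def:sb} removes this case. I expect no real obstacle beyond stating that assumption explicitly and noting that the projection of the real axis onto $\ell_\phi$ preserves order.
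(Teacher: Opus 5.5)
Your proposal is correct and follows the paper's proof. The paper likewise symmetrizes $P(\tau)$ as $\Pi^{1/2}P(\tau)\Pi^{-1/2}$ using detailed balance, concludes that the eigenvalues are real and so cannot exchange order without colliding, and uses that detailed balance is linear in $P$ for the convex combinations. Your extra checks (the projection onto $\ell_\phi$ preserves order, and $\pi$ remains stationary along the path) are details the paper leaves implicit.
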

\begin{proof}
Fix $\tau$ and let $\Pi$ be the diagonal matrix with entries $\pi_a$.
Equation~\eqref{eq:db} says that $\Pi P(\tau)$ is symmetric, so
$S=\Pi^{1/2}P(\tau)\,\Pi^{-1/2}$ is symmetric and has the same eigenvalues as
$P(\tau)$. These eigenvalues are therefore real for every $\tau$, and real
eigenvalues cannot exchange order without colliding. For the last statement, if
$P_t$ and $P_{t+1}$ satisfy~\eqref{eq:db} with the same $\pi$, then so does
$(1-s)P_t+sP_{t+1}$, because~\eqref{eq:db} is linear in $P$.
\end{proof}
A walk that is not reversible has, for some pair, more flow in one direction than
in the other. Because the total flow into a community equals the total flow out of
it, such an imbalance must continue around a cycle, and we call this net flow around
cycles \emph{circulation}. Community walks on \emph{undirected} graphs are reversible at every snapshot, so their snapshots have real eigenvalues. Hence, spectral braids are specific to directed networks, and among those, to networks with circulation. This tells us where the method applies and where it
does not (Section~\ref{sec:scope}).

Even with circulation, the braid of all eigenvalues is not the right object,
because $P_t$ has real entries.

\begin{proposition}[Real matrices]\label{prop:expsum}
For a closed path of real matrices with simple spectrum, the number of $\sigma$'s
minus the number of $\sigma^{-1}$'s in the braid is zero.
\end{proposition}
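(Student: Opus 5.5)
The plan is to express the exponent sum of the spectral braid as the winding number of a real-valued function, and to observe that a real, nowhere-zero function cannot wind around the origin. The function is the discriminant of the characteristic polynomial,
\[
D(\tau)=\prod_{i<j}\bigl(\lambda_i(\tau)-\lambda_j(\tau)\bigr)^2 ,
\]
where $\lambda_1(\tau),\dots,\lambda_n(\tau)$ are the eigenvalues of $P(\tau)$. This product is symmetric in the eigenvalues, so it is a polynomial in the coefficients of $\det(\lambda I-P(\tau))$, and hence in the entries of $P(\tau)$. The argument then has two steps.

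\textbf{Step 1.} Show that the exponent sum $e(\beta)$, the number of $\sigma$'s minus the number of $\sigma^{-1}$'s, satisfies
\[
2\pi\, e(\beta)=\Delta_{[0,T]}\arg D(\tau)
\]
for a closed path of distinct unordered points.
\begin{itemize}
\item Both sides are invariant under deformation of the loop. The left side is the abelianization $B_n\to\mathbb{Z}$, which respects the relations $\sigma_i\sigma_{i+1}\sigma_i=\sigma_{i+1}\sigma_i\sigma_{i+1}$ and $\sigma_i\sigma_j=\sigma_j\sigma_i$. The right side is a homotopy invariant of loops in $\C^*$, and $D$ is a well-defined map from the configuration space to $\C^*$.
\item Both sides are additive under concatenation of loops.
\item It therefore suffices to check the formula on a generator. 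Take $\sigma_i$ realized as a counter-clockwise half-twist of two points adjacent along the projection line $\ell_\phi$, with the other points held fixed and far away.
\begin{itemize}
\item The factor $(\lambda_a-\lambda_b)^2$ of the swapping pair rotates by exactly $2\pi$.
\item For each fixed point $\lambda_c$, the factors $(\lambda_a-\lambda_c)^2(\lambda_b-\lambda_c)^2$ start and end at the same value, because $\lambda_a$ and $\lambda_b$ are exchanged. Their argument changes by a small amount that tends to $0$ as the twist shrinks, and since the total change is a multiple of $2\pi$, it is $0$.
\end{itemize}
\item Hence $\sigma_i$ contributes $+2\pi$ and $\sigma_i^{-1}$ contributes $-2\pi$.
\end{itemize}
The word read off in Step~3 is one representative of the braid. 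The cancellation of adjacent pairs does not change $e(\beta)$, so the formula applies to that word.

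\textbf{Step 2.} Use that the matrices are real.
\begin{itemize}
\item Every $P(\tau)=(1-s)P_t+sP_{t+1}$ is a real matrix, so its characteristic polynomial has real coefficients, and $D(\tau)$ is real.
\item By the simple-spectrum assumption, $D(\tau)\neq 0$ for all $\tau$.
\item A continuous, real, nowhere-zero function on $[0,T]$ has constant sign, so $\arg D(\tau)$ is constant mod nothing: its total change along the loop is $0$.
\end{itemize}
Step~1 then gives $e(\beta)=0$.

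\textbf{Main obstacle.} The delicate part is Step~1. One might instead try to argue directly that complex conjugation maps the braid to its mirror image, so that $e(\beta)=-e(\beta)$. The natural projection for that argument is onto the real axis, but conjugate pairs always share a real part, so that line is never generic, and the direct route becomes messy. The discriminant identity avoids choosing a line at all. It relies only on invariance of both sides and on the local computation for one generator. I would present that computation carefully, and rely on the standard fact that the discriminant induces the abelianization of $B_n$ for the rest.
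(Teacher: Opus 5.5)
Your proposal is correct and takes the same route as the paper. Both identify the exponent sum with the winding number of the discriminant $\prod_{i<j}(\lambda_i-\lambda_j)^2$, then observe that this discriminant is real and nonvanishing along a path of real matrices with simple spectrum, so it cannot wind around $0$. The paper simply asserts that the exponent sum equals this winding number; your Step~1 justifies it through the abelianization $B_n\to\mathbb{Z}$ and a local check on one half-twist.
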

\begin{proof}
This number equals the winding number around $0$ of the discriminant
$\prod_{i<j}(\lambda_i-\lambda_j)^2$. For a real matrix, the discriminant is real, so
it cannot wind around $0$.
\end{proof}
The reason is that complex eigenvalues of a real matrix come in mirror pairs
$\lambda,\bar\lambda$. When one of them rotates counterclockwise, its mirror image
rotates clockwise, so every crossing in the upper half-plane is matched by a
crossing of the opposite sign in the lower half-plane. The braid of all eigenvalues
is therefore the braid of the upper half-plane together with its mirror image: it
has twice as many strands and carries no additional information, and summaries that
do not separate the two halves, such as the count above, vanish. Moreover, it is
often not defined, because real eigenvalues move on a line, where they collide. We
therefore use only one eigenvalue of each pair.

\begin{definition}[Upper-half-plane braid]\label{def:uhp}
$\beta_+$ is the spectral braid restricted to the eigenvalues with
$\mathrm{Im}\,\lambda>0$, defined while none of them crosses the real axis.
\end{definition}
Since the upper half-plane is itself a plane, $\beta_+$ is an ordinary
braid, and nothing is lost: the lower half is the mirror image of the upper half.
All braids in the rest of the paper are upper-half-plane braids.

Finally, a feature for dynamic networks should describe what happened and not how fast or how strongly it happened. The braid has this property by construction.

\begin{remark}[Invariance]\label{rem:inv}
A braid depends only on the topology of the motion. It is unchanged if the network
evolves faster or slower, burstily or uniformly, or along a deformed path, as long as
no two eigenvalues collide.
\end{remark}

\subsection{What eigenvalues and crossings mean}\label{sec:interp}
The eigenvalues of $P_t$ have a direct reading in a common situation. We describe it
for online communities, where a directed edge is a reply or a repost. Consider a group of $k\ge3$ communities that pass content around a loop. Look at one
interaction that starts in a community of the group. With probability $r$ it stays
inside the same community. With probability $f$ it goes to the next community of the
loop, with probability $g$ to the previous one, and with a small probability $x$ it
leaves the group. We summarize the group by two numbers:
\begin{definition}\label{def:ret}
The \emph{retention} of the group is the probability $r$ that an interaction
starting in one of its communities stays inside the same community.
\end{definition}

\begin{definition}\label{def:circ}
The \emph{circulation} of the group is the difference $c=f-g$ between the
probability $f$ that an interaction goes to the next community of the loop and the
probability $g$ that it goes to the previous one.
\end{definition}
In a network of online communities, we read the retention as the
\emph{echo-chamber level} of the group: how much each community talks to itself. We
read the circulation as its \emph{broadcast level}: how one-way the interactions
between its communities are. If $c$ is large, content is passed along a chain. If
$c=0$, the communities reply to each other back and forth, which is a conversation.

The transition matrix of the group is a $k\times k$ circulant block: every row
contains $r$ on the diagonal, $f$ in the next position, and $g$ in the previous one.
Its $k$ eigenvalues are therefore known in closed form,
\[
\lambda_j=r+f\,\omega^{j}+g\,\omega^{-j},\quad \omega=e^{2\pi i/k},\quad
j=0,\dots,k-1 .
\]
The eigenvalue $\lambda_0=1-x$ is real. The eigenvalues $\lambda_1$ and
$\lambda_{k-1}$ are complex conjugates, and for $f>g$ the first of them lies in the
upper half-plane. Using $f+g=1-x-r$ and $f-g=c$, it equals
\begin{equation}\label{eq:state}
\lambda_1 \;=\; \underbrace{r+(1-x-r)\cos\tfrac{2\pi}{k}}_{\text{echo-chamber level}}
\;+\; i\,\underbrace{c\,\sin\tfrac{2\pi}{k}}_{\text{broadcast level}} .
\end{equation}
For $k=3$ this is $\tfrac32 r-\tfrac{1-x}{2}+i\tfrac{\sqrt3}{2}c$, and for $k=4$ it
is $r+ic$. In both cases, $\lambda_1$ is the only eigenvalue of the group in the
upper half-plane, and we call it \emph{the eigenvalue of the group}. Its real part
grows with the retention $r$, and its imaginary part is proportional to the
circulation $c$. So the eigenvalue of the group is a point that moves right when the
group becomes more of an echo chamber and up when it broadcasts more. A group of
five or more communities has further complex eigenvalues $\lambda_2,\lambda_3,\dots$
and contributes more than one point.

Now consider a network made of several such groups. Its transition matrix $P_t$
contains one block per group. If the interactions that leave a group are spread
evenly over the communities of the group they arrive in, the eigenvalue of each
group is also an eigenvalue of $P_t$. The upper half-plane then contains one point
per group, and each point moves according to the echo-chamber level and the
broadcast level of its own group. In this setting, a crossing of the braid has a
plain meaning. It is the moment one group becomes a stronger echo chamber than
another. Its sign says which of the two was broadcasting more at that moment:
$\sigma_i^{-1}$ if it was the group whose echo-chamber level is rising, and
$\sigma_i$ if it was the group whose echo-chamber level is falling. The braid word
is therefore a transcript of how the groups overtook each other. The sign is
relative: it compares the two groups involved and does not refer to a fixed level.

We close this section with an example that brings the preceding notions together.

\begin{example}\label{ex:uhp}
Consider a network with six communities that form two groups, $A=\{1,2,3\}$ and
$B=\{4,5,6\}$. At the first snapshot, its community transition matrix, rounded to
three decimals, is
\[
\setlength{\arraycolsep}{3pt}
P_0=\begin{pmatrix}
.423 & .361 & .165 & .017 & .017 & .017\\
.165 & .423 & .361 & .017 & .017 & .017\\
.361 & .165 & .423 & .017 & .017 & .017\\
.017 & .017 & .017 & .497 & .325 & .129\\
.017 & .017 & .017 & .129 & .497 & .325\\
.017 & .017 & .017 & .325 & .129 & .497
\end{pmatrix}.
\]
Group $A$ has retention $0.423$ and circulation $0.361-0.165=0.196$; group $B$ has
retention $0.497$ and circulation $0.196$. The eigenvalues of $P_0$ are
\[
1,\qquad 0.9,\qquad 0.16\pm0.17\,i,\qquad 0.27\pm0.17\,i .
\]
By~\eqref{eq:state}, $0.16+0.17\,i$ is the eigenvalue of group $A$ and
$0.27+0.17\,i$ that of group $B$; the two eigenvalues below the real axis are their
mirror images.

Now let group $B$ stay as it is, while the retention of group $A$ rises to $0.570$
and returns, and its circulation rises to $0.294$, falls to $0.098$, and returns.
The network ends in the state it started from, so the path is closed. The
eigenvalue of $A$ then goes once around the eigenvalue of $B$, clockwise, and its
mirror image goes around the mirror image of $B$, counter-clockwise
(Fig.~\ref{fig:uhp}a).

The upper-half-plane braid is $\sigma_1^{-1}\sigma_1^{-1}$ (Fig.~\ref{fig:uhp}b),
and since the path is closed, it is the braid monodromy of this loop. Its first
letter records that $A$ became a stronger echo chamber than $B$ while broadcasting
more than $B$; its second letter records that $B$ overtook $A$ again while
broadcasting more than $A$. The mirror images give $\sigma_1\sigma_1$
(Fig.~\ref{fig:uhp}c), which adds no information, and in the braid of all eigenvalues, the two kinds of crossings cancel in the count of
Proposition~\ref{prop:expsum}.
\end{example}

\begin{figure}[t]
\centering
\includegraphics[width=.9\textwidth]{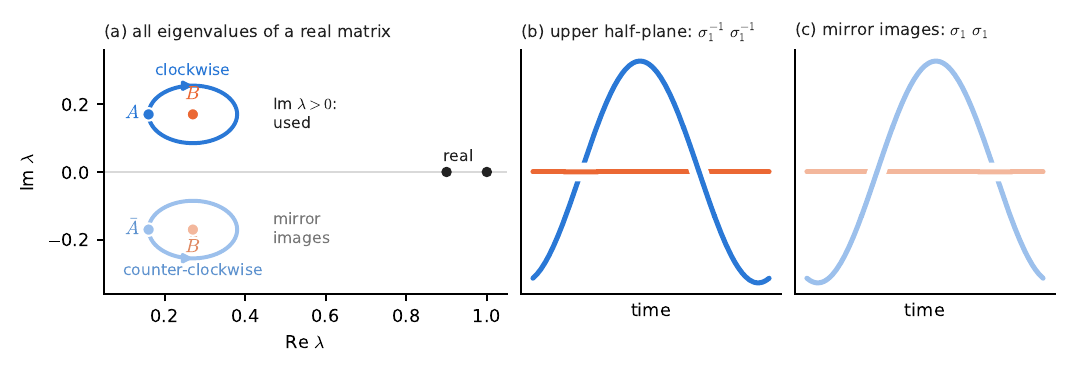}
\caption{Example~\ref{ex:uhp}. (a) The six eigenvalues of the transition matrix as
group $A$ changes and returns to its starting state; group $B$ does not change.
$A$ and $B$ mark the eigenvalues of the two groups at the first snapshot, and
$\bar A$, $\bar B$ their mirror images. Black: real eigenvalues.
(b) Braid of the eigenvalues in the upper half-plane. (c) Braid of their mirror
images.}
\label{fig:uhp}
\end{figure}
\section{Experiments}\label{sec:exp}
In this section, we evaluate spectral braids on dynamic network classification. The experiments are designed to show whether the braid can tell how a network evolved when its first and last snapshots carry no information, whether the Burau matrix
captures the order in which events happened, whether the features still work on
networks that look different from the training networks, how many edges per snapshot
the method needs, and how fast it runs on large networks.

\subsection{Datasets}\label{sec:bench}
To evaluate the method, we need dynamic, directed networks with known histories. As
discussed in Section~\ref{sec:scope}, the real datasets we examined do not carry
rotating circulation, so we generate two synthetic datasets of online communities, in
which a directed edge is a reply or a repost. Table~\ref{tab:dataset_summary}
summarizes their properties.

\begin{table}[t]
\centering
\caption{Summary of the datasets used in the experiments. Ranges are over the
networks of a dataset, with the mean in parentheses.}
\label{tab:dataset_summary}
\begin{adjustbox}{max width=\textwidth}
\begin{tabular}{|c|c|c|c|c|}
\hline
\cellcolor{gray!25}\textbf{Dataset} & \cellcolor{gray!25}\textbf{\# Networks} &
\cellcolor{gray!25}\textbf{\# Classes} & \cellcolor{gray!25}\textbf{\# Nodes} & \cellcolor{gray!25}\textbf{\# Snapshots} \\
\hline
History & 600 & 4 & 195--505 (343) &  50--152 (101)  \\
Order   & 600 & 8 & 184--506 (347) &  72--133 (101)  \\
\hline
\end{tabular}
\end{adjustbox}
\end{table}

\subsubsection{Network Generation}
Each sample is one dynamic directed network together with one label. To build a
network, we first draw the number of groups, between 4 and 6, and for each group, the number of its communities, 3 or 4. A network, therefore, has 12 to 24 communities.
Each community receives between 10 and 30 users, with an average of 20. The communities of a
group pass content around a cycle, as in Section~\ref{sec:interp}. We use groups of
this form because complex eigenvalues arise only from net flow around cycles
(Proposition~\ref{prop:rev}).

At every snapshot, each group has a retention $r$ and a circulation $c$. An interaction that starts in a community stays in that community with probability
$r$, goes to the next or to the previous community of its group with probabilities
determined by $c$, and leaves the group with probability 5\%, in which case its target is spread evenly over the other groups. To avoid an exact block structure, we
mix the resulting matrix with a random transition matrix, with weights 97\% and 3\%.

At the first snapshot, the groups have equally spaced echo-chamber levels and the
same broadcast level. The network then goes through a sequence of events determined
by its label, described below for each dataset. In an event, two groups with
neighboring echo-chamber levels exchange their levels, while one of them
temporarily broadcasts more and the other less. The properties of an event that are
not part of the label are drawn at random for every event: its duration (10 to 26
snapshots), the pause before the next event (2 to 9 snapshots), whether it proceeds
at a steady or an irregular speed, and the size of the temporary change in broadcast
level, which varies by a factor of about 3. In addition, the broadcast levels of all
groups drift slowly at random. Since these properties are drawn independently of the
label, they carry no information about it.

Finally, the edges are sampled. In every snapshot, each user sends 10 interactions,
and the community that receives each interaction is drawn from the probabilities of
that snapshot. The observed community transition matrices are therefore noisy
versions of the ideal ones. The methods receive the communities and the edges only: the communities are listed in random order, and no method is told which communities
form a group.

\subsubsection{History Dataset}
In every network, one group of communities rises from the lowest echo-chamber rank
to the highest, overtaking the other groups one at a time. Every network, therefore,
starts in the same kind of state and ends in the same kind of state, and the label
says how the rising group became the strongest echo chamber. Each time the rising
group overtakes another group, one of the two is broadcasting more than the other,
and the four classes differ in which one. In \emph{Broadcast-driven} networks, the
rising group is the broadcaster at every step: it becomes an echo chamber while
passing content one-way. In \emph{Conversation-driven} networks, the rising group is in conversation
at every step, and the group it overtakes is the broadcaster. In \emph{Mixed,
broadcast first} and \emph{Mixed, conversation first} networks, the two styles
alternate from one step to the next, starting with broadcast or with conversation.

\subsubsection{Order Dataset}
This dataset tests whether the order of events can be recovered. Every network
contains the same four events. The two lowest-ranked groups exchange ranks twice,
once with the group that moves up broadcasting more and once with it broadcasting
less, and so do the second- and third-ranked groups. The eight classes differ only in the order in
which these four events happen, so any summary that counts events without their
order is identical for all classes.

\begin{figure}[t]
\centering
\includegraphics[width=\textwidth]{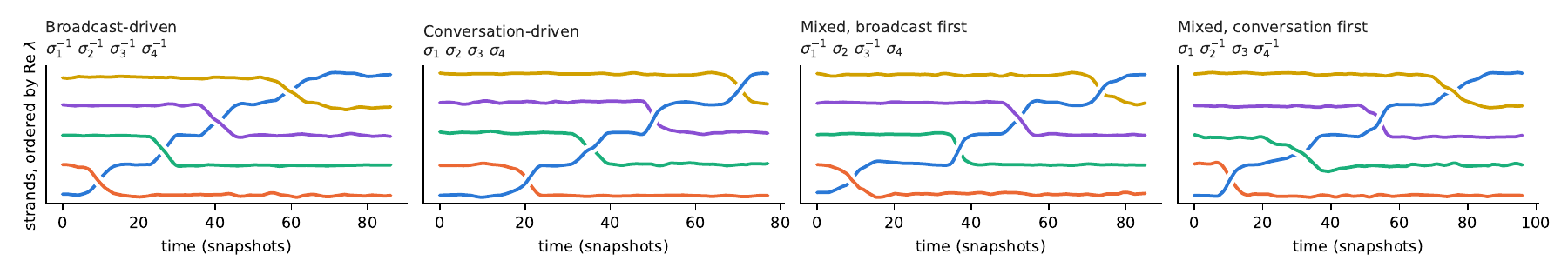}
\caption{One random network per label of the History dataset, with five strands: the
tracked eigenvalues drawn over time and the braid word read from the data. In all four, the blue strand goes from the bottom to the top; the labels differ only in whether it passes behind or in front of the others.}
\label{fig:classes}
\end{figure}

Both datasets are designed so that the classes differ only in the path between the
same start and end states, which is the situation spectral braids are meant for; they do not measure how often this situation occurs in practice.

\subsection{Evaluation}
We evaluate all methods on dynamic network classification and report accuracy.

\subsubsection{Compared Methods}
All methods receive the same input, the sequence of community transition matrices
$P_t$, and all use a random forest with 300 trees. Three baselines use the
eigenvalues without tracking them. \emph{Pooled spectra} puts the eigenvalues of all
snapshots together and summarizes them by quantiles and a histogram, ignoring time
order. \emph{Start and end} uses the eigenvalues of the first three and of the last
three snapshots. \emph{Sorted series} sorts the eigenvalues in each snapshot, which
gives a time series, and summarizes each series by the seven statistics of
Section~\ref{sec:burau}. Our method tracks the eigenvalues and describes the strands
in three ways. \emph{Tracked geometry} computes the same seven statistics on the
tracked strands. \emph{Exponent sums} counts, for each generator $\sigma_i$, how
often it occurs in the braid word and with which total sign. \emph{Burau matrix} uses
the features of Section~\ref{sec:burau} with $N=6$.

\subsubsection{Evaluation Settings}
We evaluate the methods in four settings.

In the first setting, we measure classification accuracy on each dataset with
5-fold cross-validation, repeated 5 times with different splits, and report the mean
and standard deviation over the 25 runs.

In the second setting, we test whether a classifier trained on one kind of network
works on another. For this purpose, we generate two additional versions of each
dataset, which differ only in how the events of
Section~\ref{sec:bench} unfold over time. In \emph{calm} networks, the events are
regular: every event lasts 16 to 20 snapshots and proceeds at constant speed, the
temporary change in broadcast level has nearly the same size in every event, and the
groups do not drift. In \emph{turbulent} networks, the events are irregular: an
event lasts between 10 and 26 snapshots and proceeds at an uneven speed, the
temporary change in broadcast level varies by a factor of about 3, and the
broadcast levels of all groups drift. The turbulent version follows the same
generation procedure as the main datasets. For each dataset, we generate 400 calm
and 400 turbulent networks, with the same number of networks in every class. We
train on all networks of one kind and test on all networks of the other, in both
directions.

In the third setting, we study how the accuracy depends on the amount of data in a
snapshot. We regenerate the History dataset with 10, 20, 40, 80, and 160 users per
community on average, which gives about 1{,}700, 3{,}400, 7{,}000, 13{,}700, and
28{,}000 edges per snapshot, with 200 networks at each size. For the Order dataset,
we do the same with 10, 40, and 80 users per community, with 400 networks at each
size. At every size, we measure the accuracy with a 5-fold cross-validation repeated
3 times.

In the fourth setting, we measure the running time of the two parts of the method,
building the community transition matrices and tracking the eigenvalues, on one CPU
thread.

\begin{table}[t]
\centering
\caption{Classification accuracy (\%). CV: 5-fold cross-validation (mean $\pm$ std).
C$\to$T and T$\to$C: trained on calm networks and tested on turbulent ones, and the
reverse. Best value per column is in bold; the second-best is underlined.}
\label{tab:acc}
\begin{adjustbox}{max width=\textwidth}
\begin{tabular}{|l|ccc|ccc|}
\hline
 & \multicolumn{3}{c|}{\cellcolor{gray!60}\textbf{History}} &
   \multicolumn{3}{c|}{\cellcolor{gray!60}\textbf{Order}} \\
\cline{2-7}
 & \cellcolor{gray!25}\textbf{CV} & \cellcolor{gray!25}\textbf{C$\to$T} & \cellcolor{gray!25}\textbf{T$\to$C}
 & \cellcolor{gray!25}\textbf{CV} & \cellcolor{gray!25}\textbf{C$\to$T} & \cellcolor{gray!25}\textbf{T$\to$C} \\
\hline
\multicolumn{7}{|l|}{\cellcolor{gray!10}\emph{Baselines (eigenvalues not tracked)}} \\
\hline
\cellcolor{gray!25}\textbf{Pooled spectra} & $28.2\pm3.9$ & 23.5 & 25.8 & $12.6\pm2.5$ & 13.8 & 11.5 \\
\cellcolor{gray!25}\textbf{Start and end}  & $26.6\pm3.4$ & 21.8 & 25.2 & $12.2\pm2.4$ & 11.2 & 13.0 \\
\cellcolor{gray!25}\textbf{Sorted series}  & $44.5\pm4.2$ & 41.2 & 47.8 & $12.4\pm2.8$ & 10.5 & 10.5 \\
\hline
\multicolumn{7}{|l|}{\cellcolor{gray!10}\emph{Ours (tracked eigenvalues)}} \\
\hline
\cellcolor{gray!25}\textbf{Tracked geometry} & $94.7\pm2.0$ & 91.2 & 96.2 & $\mathbf{93.2\pm2.3}$ & \textbf{92.8} & \textbf{92.0} \\
\cellcolor{gray!25}\textbf{Exponent sums}    & \underline{$96.9\pm1.3$} & \textbf{96.8} & \textbf{97.2} & $12.3\pm1.8$ & 11.8 & 10.0 \\
\cellcolor{gray!25}\textbf{Burau matrix}     & $\mathbf{97.1\pm1.4}$ & \underline{93.5} & \underline{97.0} & \underline{$90.8\pm2.7$} & \underline{88.8} & \underline{86.5} \\
\hline
\end{tabular}
\end{adjustbox}
\end{table}

\subsection{Results}

\subsubsection{Classification}
Table~\ref{tab:acc} reports the accuracy of all methods on the two datasets.
Overall, the results show a clear gap between the baselines and our method: once the
eigenvalues are tracked, both tasks become solvable, and without tracking, neither is.

On the History dataset, pooled spectra and start-and-end features are at chance
(28.2\% and 26.6\%). This is expected, because the labels differ only in the path
between the same two states. The sorted series reaches 44.5\%. It uses time, but
sorting mixes up which eigenvalue is which whenever two of them pass each other.
With tracked eigenvalues, all three of our descriptors reach 94.7\% to 97.1\%, more
than 50 points above the best baseline, and the Burau matrix is the highest. The
differences among our three descriptors are within their standard deviations.
Fig.~\ref{fig:classes} shows the braid read from one network of each class.

On the Order dataset, all three baselines are at chance, and the two braid
descriptors separate. Exponent sums fall to chance (12.3\%), because every network
contains the same four crossings. The Burau matrix reaches 90.8\%, because the
product of the generator matrices depends on their order. Tracked geometry reaches
93.2\%, level with the Burau matrix within the standard deviations. This indicates
that the Burau matrix is the braid descriptor to use: it matches simple counts when
order does not matter and keeps working when it does.

\subsubsection{Transfer to Different-Looking Networks}
The transfer columns of Table~\ref{tab:acc} are close to the cross-validation
columns for our descriptors. Burau features keep 93.5\% and 97.0\% on the History
dataset, and 88.8\% and 86.5\% on the Order dataset, and tracked geometry keeps
91.2\% to 96.2\%. For the braid, this agrees with the invariance remark of
Section~\ref{sec:props}: a braid does not depend on the speed or the shape of the
motion. The baselines remain at the levels they had during cross-validation.

\begin{figure}[t]
\centering
\includegraphics[width=0.5\textwidth]{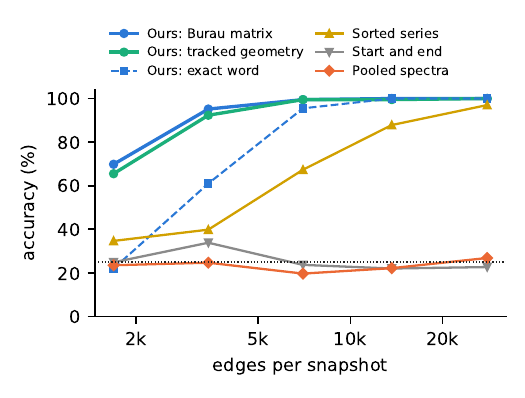}
\caption{Accuracy on the History dataset against the number of edges per snapshot,
for our descriptors and the three baselines. Exact word: the share of networks whose
braid word equals the planted one, with no training. The dotted line is chance.}
\label{fig:sweep}
\end{figure}

\subsubsection{Sensitivity to the Number of Edges}
The matrix $P_t$ is estimated by counting the edges of one snapshot. With few edges,
the estimate is noisy and the eigenvalues jitter, and when the jitter is as large as
the distance between two eigenvalues, the tracking records a crossing that did not
happen or misses one that did. Fig.~\ref{fig:sweep} shows how the accuracy depends
on the number of edges per snapshot.

Our descriptors are above the baselines across all densities. Burau features reach
69.8\% at 1{,}700 edges, 95.2\% at 3{,}400, and 99.5\% at 7{,}000, and tracked
geometry reaches 65.5\%, 92.3\%, and 99.5\%. The values at 3{,}400 edges differ
slightly from those in Table~\ref{tab:acc} because they are measured on a separate
sample of 200 networks. The sorted series needs far more data
and reaches 97.0\% only at 28{,}000 edges. Pooled spectra and start-and-end features
stay near chance at every density, because more data does not add the missing
information about the path.

The dashed line shows how often the word read from the data is exactly the planted
word: 61\% at 3{,}400 edges and 100\% from 13{,}700 edges. The classifier is more
tolerant than exact matching, since many networks whose words contain an error are
still classified correctly. On the Order dataset with sparse data, tracked geometry
is the more accurate of our two descriptors (66.6\% against 55.2\% at 1{,}650
edges), because averaging over a strand absorbs noise while one wrong crossing
changes the whole Burau matrix; from about 6{,}800 edges both exceed 98.9\%. These
results suggest that the method needs a few thousand edges per snapshot for about 20
communities.

\subsubsection{Running Time}
On one CPU thread and 21 communities, building $P_t$ takes 0.6 ms for $10^5$ edges,
8.5 ms for $10^6$, and 144 ms for $10^7$, in proportion to the number of edges.
Tracking the eigenvalues takes about 1 ms per snapshot whatever the size of the
network. A network with ten million edges per snapshot and 100 snapshots is
therefore processed in about 15 seconds.

\section{Applicability to Real Networks}\label{sec:scope}
We applied the method to four real dynamic directed networks, and none of them gave
a non-trivial braid that was reproducible. The first is the e-mail network of a
research institution~\cite{paranjape2017}, where nodes are people; we used its six
largest communities, found by the Louvain method, and 4-week snapshots. The second
is international trade from 1950 to 2009~\cite{barbieri2009}, where nodes are
countries; we grouped them into six world regions and used yearly snapshots. The
third is the Reddit hyperlink network~\cite{kumar2018}, with 858{,}454 links between
subreddits; we used 10 Louvain communities of subreddits and 3-month snapshots. In
these three networks, the eigenvalues of $P_t$ are real or almost real, so there
are no strands to braid. On Reddit, for example, a snapshot has, on average, 0.03
pairs of complex eigenvalues out of 5 possible.

The fourth is the Bay Area Bike Share system~\cite{babs2015}, with 628{,}197 trips
between stations in San Francisco from 2013 to 2015; we grouped the stations into
eight zones by location. Here, the eigenvalues of $P_t$ are also mostly real. To give
one-way flow more weight, we replaced $P_t$ by the matrix $S+\theta K$, where $S$
and $K$ are the symmetric and antisymmetric parts of the matrix of trip counts
between zones, and $\theta>1$ amplifies the antisymmetric part. With this matrix,
three eigenvalues stay complex throughout the morning rush and throughout the
evening rush, so a braid is defined for each. In both periods, the braid is the
trivial one, and it remains trivial when the days are split at random into two
halves.

These results are what Proposition~\ref{prop:rev} predicts, and they point to a
second condition. In the e-mail, trade and Reddit networks, the flows between
communities are close to balanced, so the random walks are close to reversible and
their eigenvalues are real. The bike data is not balanced: in the morning rush, some
zones send out many more trips than they receive, and in the evening, the direction
reverses. Within a rush period, however, the same zones send to the same zones
throughout, and only the strength of the flow rises and falls: a single fixed matrix
of net flows accounts for 92\% of their variation in the morning and 94\% in the
evening. The eigenvalues then move away from the real axis and return along the
paths they came on, so they do not pass around each other. A non-trivial spectral
braid therefore needs more than one-way flow: the \emph{direction} of the net flow
must change over time among at least three groups, for example from $a\to b$ to
$b\to c$ to $c\to a$, and not only its strength.

\section{Conclusion}
In this paper, we introduced spectral braids, a topological methodology for dynamic
directed networks, and, to the best of our knowledge, the first definition of braids
and braid monodromy on graph data. The method aggregates each snapshot into a
community transition matrix, tracks its complex eigenvalues over time, and records
how they pass around each other as a braid. The braid is defined for any observation
window, costs time linear in the number of edges, and becomes a fixed-size feature
vector through the Burau representation. We showed that the braid is non-trivial only when the network carries circulation, introduced the upper-half-plane braid, which removes the mirror-image redundancy of
real matrices, and gave the eigenvalues and the
crossings a concrete meaning in terms of how groups of communities behave. On two
synthetic datasets of online communities, in which the classes differ only in the
path between the same starting and ending states, the proposed features reach
97.1\% accuracy on four classes, against 44.5\% for the best features that do not
track eigenvalues. When the classes contain the same events in a different order,
the Burau features reach 90.8\%, while counts of crossings stay at chance. The
features remain accurate on networks that evolve less regularly than the training
networks.

In future work, we plan to apply spectral braids to real systems in which the
direction of the net flow changes among groups over time, to define the braid
without a given community partition from the eigenvalues of the node-level matrix,
to use the eigenvectors of the tracked eigenvalues to obtain node-level features, to
study braid invariants and representations beyond the Burau matrix, and to use braid
features as inputs to temporal graph neural networks.

\bibliographystyle{unsrt}
\bibliography{references}

\end{document}